\providecommand{\U}[1]{\protect\rule{.1in}{.1in}}
\newtheorem{theorem}{Theorem}
\newtheorem{definition}[theorem]{Definition}
\newtheorem{lemma}[theorem]{Lemma}
\newtheorem{notation}[theorem]{Notation}
\newenvironment{proof}[1][Proof]{\noindent\textbf{#1} }{\ \rule{0.5em}{0.5em}}
\newcommand{\beq} {\begin{eqnarray*}}
\newcommand{\eeq} {\end{eqnarray*}}
\title{Semi-Parametric estimation of plane similarities
Application to fast computation of aeronautic loads} 
\date{} 
\author[1,2,3]{Edouard Fournier\thanks{ \texttt{edouard.fournier@airbus.com } }}\author[2]{  St\'ephane Grihon\thanks{ \texttt{stephane.grihon@airbus.com } }} \author[1,3]{Thierry Klein\thanks{ \texttt{thierry.klein@math.univ-toulouse.fr  or thierry01.klein@enac.fr} } } 
\affil[1]{Institut de math\'ematique, UMR5219; Universit\'e de Toulouse;  CNRS, UPS IMT, F-31062 Toulouse Cedex 9, France} 
\affil[2]{Airbus France 316, Route de Bayonne, Toulouse France} 
  \affil[3]{ENAC - Ecole Nationale de l'Aviation Civile, Universit\'e de Toulouse, France}
\begin{document}

\maketitle 

\begin{abstract}
In the big data era, one has often  to conduct the problem of parcimonious data representation. 
In this paper, the data under study are curves and the sparse representation stands on a semi-parametric model. 
Indeed, we propose an original registration model for noisy curves. The model is built transforming an unknown function by plane similarities.  We develop a statistical method that allows to estimate the parameters characterizing  the  plane similarities. The properties of the statistical procedure is studied. We show the convergence and the asymptotic normality of the estimators. Numerical simulations and a real life aeronautic example  illustrate and demonstrate the strength of our methodology.\\

\noindent
\begin{it}Keywords:\end{it} Semiparametric model, Registration of curves, Statistical learning of a physical system \\

\noindent
\begin{it} MSC Classification:\end{it} 62F12, 62F30,  62P30
\end{abstract}

\section{Introduction}
\label{intro}

It may be useful when dealing with a large set of curves differing slightly from one to other to  provide a smart representation. Indeed, usually a reduction taking into account some prior knowledge on the curves may lead to a better understanding of the variability within the population. One way to perform such reduction consists in considering that the set of curves has been obtained by deforming the template. This point of view is usually called curve registration and has been widely studied in statistics  (see  for example   \cite{gasser1995,golubev1988,hardle1990,kneip1995,kneip2000,lawton1972} and references therein).  More recently, this topic has found a second wind pushed by applications in image and signal processing.  We refer for example to \cite{grenander93,McGuire1998ParameterRecovery,
allassonniere2013,gamboa2007,bigot2010,bigot09,gassiat2006,vimond2006} 
for models and techniques related to signal processing problems.
The set of transformations often consists in a parametric family of operator acting on curves \cite{bigot2012}.  One very popular of such model is the so-called shape invariant model (SIM) introduced in \cite{lawton1972} and studied for example in \cite{kneip1995, lindstrom1995,hardle1990,ke2001,gamboa2007}. 
Multidimensional extension of SIM has been studied for functions defined on the plane. In this case, the parametric transformations involve  
rotation and scaling parameters. This model has been studied for image registration see for example \cite{bigot09} or  \cite{McGuire1998ParameterRecovery}. For non parametric approaches, we refer to \cite{ramsay} and references therein.
Generally, the parametric transformation on the curve acts independently on the argument and on the value of the template function. More precisely, let $\tilde{f}$ denotes the template function and $\theta$ denotes the variable parameterizing the transformation. Then,
the transformed function evaluated  at $x$  may be written as 
$T_{2,\theta}(f(T_{1,\theta}(x))$ where $T_{2,\theta}$ is an application on 
$\mathbb{R}$ and $T_{1,\theta}$ acts on the set where $\tilde{f}$ is defined. 
The parametric estimation  is then performed using some $M$-estimation method (see \cite{vdV00}). In this paper, we work on functions defined on $[0,1]$ and  consider a somehow different class of parametric transformations. Indeed, we consider transformations acting jointly on the argument and on the value of the template function. We work with the plane similarities acting on the whole curve 
$\tilde{C}:= (x,\tilde{f}(x))_{x\in[0,1]}$.\\

\noindent
In our paper, we consider first the case where the template is known on the whole design space.  Then, we extend  the results to the more realistic set up where curves and the template are only observed on a grid. The estimations of the unknown parameters are performed using a $M$-estimation procedure.\\

\noindent
Our paper is organized as follow: in Section \ref{sec:1}, we define our model, we develop  $M$-estimation techniques to perform estimation and study the asymptotic behaviour of the estimators. Section \ref{sec:2} is devoted to examples.  We first give a toy model example validating our procedure.  Then, we apply the methodology to an aeronautic model: the prediction of aeronautic loads (see \cite{fournier}). \\

\noindent
All the proofs are postponed to the last section.

\section{Framework, model and analytic results}
\label{sec:1}
In this section, we describe the statistical model studied and give the asymptotic behavior of the $M$-estimators of the unknown parameters.

\subsection{The observations}
\label{sec:11}

\begin{notation} Let be $x \in [0,1]$, and $f: [0,1]\to \mathbb{R}^{+}$. We denote by $C$ the curve\\
    $C:=\begin{pmatrix} x \\ f(x) \end{pmatrix}_{x\in [0,1]}$.
\end{notation}

In our framework, we have at hand $K+1$ curves. $\tilde{C}$ is the reference curve and we assume that the $K$ other curves $C_{j},\ j=1,...,K$ are the images of $\tilde{C}$ by the transformation model described bellow. The $K$ curves are observed on the same random grid $\mathcal{D}_{N}:=\{X_{1},...,X_{N}\}$ where $(X_{i})_{i=1,...,N}$ are iid random variables with uniform distribution. Hence we have at our disposal\\

\begin{center}
  \begin{tabular}{l}
    $C_{j}^{N}=(X_{i},f_{j}(X_{i}))_{i=1,...,N,\ j=1,...,K}$.\\
  \end{tabular}
\end{center}

We will consider the following two cases\\
\begin{itemize}
    \item[] i) $\tilde{C}$ is known everywhere: $\tilde{C}=(x,\tilde{f}(x))$, $\forall x \in [0,1]$,\\
    \item[] ii) $\tilde{C}$ is only known on $\mathcal{D}_{N}$: $\tilde{C}=(x,\tilde{f}(x))$, $\forall x \in \mathcal{D}_{N}$.\\
\end{itemize}

\subsection{Transformation model}
\label{sec:12}
Before defining the transformation model linking the $K$ observed curves to the pattern $\tilde{C}$, one must ensure that the functions $f_{j}$, $j=1,...,K$ and $\tilde{f}$ are "admissible". This is the aim of the next definition:\\

\begin{definition}Let $\theta_{0} \in ]0,\frac{\pi}{2}[$, $\mathcal{F}_{\theta_{0}}$ is the set of applications defined by:
\begin{center}
    $\mathcal{F}_{\theta_{0}}:=\{f:[0,1]\to \mathbb{R}^{+}, f\ \mathrm{is}\ \mathrm{differentiable}\ \mathrm{on}\ [0,1],$\\
    $f(0)>0, f(1)=0$,\\ 
    $f'(x)<0,f'(1)=0,  f'(0)>-\cot{\theta_{0}}\}$.
\end{center}
\end{definition}

Let now define the parametric family of transformation that we will put in action.\\

\begin{notation}For any function $T_{\alpha}:  \mathbb{R}^{2} \to \mathbb{R}^{2}$ depending on a parameter $\alpha\in\Theta$, we will denote by $T_{\alpha}^{1}$ and $T_{\alpha}^{2}$ its two coordinates.\\
\end{notation}

In the following, we define our transformation model.\\

Set $0<\theta_{0}<\theta_{1}<\frac{\pi}{2}$, and $0<\lambda_{min}<\lambda_{max}$. For any $\alpha:=(\theta,\lambda) \in \Theta= [-\theta_{0},\theta_{1}]\times [\lambda_{min},\lambda_{max}]$, set\\
\begin{center}
    $T_{\alpha}:=H_{\lambda}\circ S_{\theta}\circ R_{\theta}$
\end{center}
the composition of a rotation $R_{\theta}$, a rescaling $S_{\theta}$ applied on the $x$-axis, and a homothetic transformation $H_{\lambda}$ on the $y$-axis. More precisely\\
\begin{itemize}
    \item[$\bullet$] $R_{\theta}$ is the rotation centered in $\begin{pmatrix} 1 \\ 0 \end{pmatrix}$ and of angle $\theta$. That is\\
    \begin{align*}
    R_{\theta}: [0,1]\times\mathbb{R}^{+} &\to [-A_{min},1]\times\mathbb{R} \\ \begin{pmatrix} x \\ y \end{pmatrix} &\to \begin{pmatrix} (x-1)\cos{\theta}-y\sin{\theta}+1 \\ (x-1)\sin{\theta}+y\cos{\theta} \end{pmatrix},
    \end{align*}
    with $A_{min}=\sqrt{1+y(0)^{2}}-1$.\\
    
    \item[$\bullet$] After rotating a curve, depending of the angle $\theta$, the resulting design space of the curve is generally no more [0,1]. Hence, we define the following transformation to re-position the curve on [0,1]:\\
    \begin{align*}
    S_{\theta}: [-A_{min},1]\times\mathbb{R} &\to [0,1]\times\mathbb{R} \\ \begin{pmatrix} x \\ y \end{pmatrix} &\to \begin{pmatrix} \frac{x-a}{1 -a} \\ y \end{pmatrix},
    \end{align*}
    with $a$ is the real minimum value of the design space after rotating the curve.\\
    \item[$\bullet$] The scaling transformation of parameter $\lambda>0$ acting on the second coordinate of the curve:
    \begin{align*}
    H_{\theta}: [0,1]\times\mathbb{R^{+}} &\to [0,1]\times\mathbb{R}^{+} \\ \begin{pmatrix} x \\ y \end{pmatrix} &\to \begin{pmatrix} x \\ \lambda y \end{pmatrix}.
    \end{align*}
\end{itemize}

We now introduced the assumption that will be used:\\
\begin{itemize}
    \item[ ] \textbf{(A1)}: The functions $(f_{j})_{j=1,...,K}$, and $\tilde{f}$ belong to $\mathcal{F}_{\theta_{0}}$.\\
    
     \item[ ] \textbf{(A2)}: For any $\theta \in [-\theta_{0},\theta_{1}]$, for any $x \in [0,1]$ and any $f \in \mathcal{F}_{\theta_{0}}$, the second coordinate of $R_{\theta}$ is positive.\\ 
    
    \item[ ] \textbf{(A3)}: $\Theta=[-\theta_{0},\theta_{1}]\times [\lambda_{min},\lambda_{max}]$.\\
    
    \item[ ] \textbf{(A4)}: $\tilde{C}$ is known on $[0,1]$.\\
    
    \item[ ] \textbf{(A5)}: $\tilde{C}$ is known on the grid $\mathcal{D}_{N}$.\\
    
\end{itemize}

Thus, the transformation model considered is as follows:\\
\begin{align*}
    T_{\alpha}:  [0,1]\times\mathbb{R}^{+} &\to [0,1]\times\mathbb{R}^{+} \\ \begin{pmatrix} x \\ f(x) \end{pmatrix} &\to \begin{pmatrix} \frac{(x-1)\cos{\theta}-f(x)\sin{\theta}}{\cos{ \theta}+f(0)\sin{\theta}}+1  \\ \lambda((x-1)\sin{ \theta}+f(x)\cos{\theta}) \end{pmatrix}.
\end{align*}

\subsection{Regression model}
\label{sec:13}

Recall that we wish to adjust the reference curve $\tilde{C}$ on the other curves $C_{j}$ ($j=1,...,K$) by transformations defined in the previous subsection. Notice that these transformations act on both axis. For any $\alpha$, we want to compare the value of the transformed curve $(T_{\alpha}\tilde{C})(X_{i})$ with $f_{j}(X_{i})$. Since the abscissa points are affected by the transformation, we denote by $X_{i}(\alpha)$ the point such that $T_{\alpha}^{1}(X_{i}(\alpha))=X_{i}$. For that reason, we introduce the following definition:
\begin{definition}We denote by $x(\alpha,g)$ the solution of the equation:
\begin{equation}
    T_{\alpha}^{1}(u,g(u))=x.
\label{eq:eq1}
\end{equation}
\end{definition}

To ease the notation, we finally set $x(\alpha):=x(\alpha,g)$, so $X_{i}(\alpha)=X_{i}(\alpha,\tilde{f})$. We consider the parametric regression model:
\begin{equation}
f_{j}(X_{i})=T_{\alpha_{j}^{*}}^{2}(X_{i}(\alpha_{j}^{*}),\tilde{f}(X_{i}(\alpha_{j}^{*})))+\epsilon_{j,i}, (j=1,...,K).
\label{eq:eq2}
\end{equation}

Where:
\begin{itemize}
    \item[$\bullet$] $(X_{i})$ are iid with distribution $\mathbb{P}_{X}$. It is the design on which we observe the curves $C_{j}$;
    \item[$\bullet$] $\alpha_{j}^{*}=(\theta_{j}^{*}$,$\lambda_{j}^{*})$ is the couple of true parameters for each curve ($j=1,...,K$);
    \item[$\bullet$] $\epsilon_{j,i},\ \forall i=1,...,N, \ \forall j=1,...,K$ are iid $\mathcal{N}(0,\sigma^{2})$ random variables. These variables are assumed to be independent of $X_{i}$.\\
\end{itemize}

\subsection{Estimation}
\label{sec:14}
\subsubsection{Estimation when $\tilde{C}$ is known on $[0,1]$}
\label{sec:141}

For the sake of simplicity, let us fix $j$. Relying on a classical $M$-estimation procedure, we consider a semi-parametric method to estimate the parameters and define consequently the following empirical contrast function to fit the reference curve $\tilde{C}$ to  $C_{j}$ ($j=1,...,K$):

\begin{equation}
\begin{split}
M^{j}_{N}(\alpha)&=\frac{1}{N}\sum_{i=1}^{N}(f_{j}(X_{i})-T_{\alpha}^{2}(X_{i}(\alpha),\tilde{f}(X_{i}(\alpha))))^{2}\\
&=\frac{1}{N}\sum_{i=1}^{N}m_{\alpha}^{j}(X_{i}).
\end{split}
\label{eq:eq3}
\end{equation}

The random function $M^{j}_{N}$ is non negative. Furthermore, intuitively, its minimum value
should be reached close to the true parameter $\alpha_{j}^{*}$. Indeed, the following theorem gives the consistency of the $M$-estimator, defined by :
\begin{equation}
\hat{\alpha}^{j}_{N}=\underset{\alpha \in \Theta}{\mathrm{arg min}}\ M^{j}_{N}(\alpha).\\
\label{eq:eq4}
\end{equation}

Recall that our empirical contrast function enters in the general theory of $M$-estimator. The Central Limit Theorem will be shown by using $M$-estimator arguments.\\

\begin{theorem}
Assume that A1, A2, A3 and A4 are satisfied. Then 

\begin{align}
\begin{split}\label{eq:eq5}
    i)\ {}& \hat{\alpha}_{N}^{j} \xrightarrow[N \to +\infty]{\mathbb{P}} \alpha_{j}^{*},\\
\end{split}\\
\begin{split}\label{eq:7}
    ii)\ {}& \sqrt{N}(\hat{\alpha}_{N}^{j}   -\alpha_{j}^{*}) \xrightarrow[N\to + \infty]{\mathcal{L}} \mathcal{N}(0,\Gamma_{\alpha_{j}^{*}}).\\
\end{split}
\end{align}
In particular, the covariance matrix has the following form
\begin{equation}
    \Gamma_{\alpha_{j}^{*}} = V_{\alpha_{j}^{*}}^{-1}2\sigma^{2},
\label{eq:eq8}    
\end{equation}
with $V_{\alpha_{j}^{*}}=2\mathbb{E}[\dot T_{\alpha_{j}^{*}}^{2}\dot T_{\alpha_{j}^{*}}^{2\textbf{T}}]$, and $\dot T_{\alpha_{j}^{*}}^{2}$ is the vector of partial derivatives of $T_{\alpha_{j}}^{2}$ w.r.t elements of $\alpha$ taken at $\alpha_{j}^{*}$.
\end{theorem}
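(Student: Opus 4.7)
The plan is to follow the classical $M$-estimation theory of \cite{vdV00}. Since $\Theta$ is a compact rectangle, consistency reduces to (a) a uniform law of large numbers for $M_N^j$ on $\Theta$, yielding a deterministic limit $M^j(\alpha):=\mathbb{E}[m_\alpha^j(X_1)]$, and (b) identifiability of $\alpha_j^*$ as the unique minimizer of $M^j$. For (a) I would first show that the implicit function $\alpha\mapsto X(\alpha)$ is continuous (in fact $C^1$) in $\alpha$: the derivative of $u\mapsto T_\alpha^{1}(u,\tilde f(u))$ is bounded away from zero uniformly in $\alpha\in\Theta$ because the admissibility constraint $f'(0)>-\cot\theta_0$ together with $f'<0$ in $\mathcal{F}_{\theta_0}$ prevents the rotated tangent from becoming vertical, so the implicit function theorem applies. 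Combining this with continuity of $T_\alpha^2$ and $\tilde f$, the integrand $m_\alpha^j$ is continuous on the compact $\Theta\times[0,1]$ and therefore bounded, which gives uniform convergence. For (b), inserting \eqref{eq:eq2} yields
\begin{equation*}
M^j(\alpha) = \sigma^2 + \mathbb{E}\bigl[\bigl(T_{\alpha_j^*}^{2}(X(\alpha_j^*),\tilde f(X(\alpha_j^*))) - T_\alpha^{2}(X(\alpha),\tilde f(X(\alpha)))\bigr)^2\bigr],
\end{equation*}
so the minimum equals $\sigma^2$ and is attained iff the two transformed curves coincide $\mathbb{P}_X$-a.s.; since the image curve $T_\alpha(\tilde C)$ geometrically determines $\theta$ (via its right tangent direction) and $\lambda$ (via its height at $0$) under A1--A3, identifiability follows and Theorem~5.7 of \cite{vdV00} gives \eqref{eq:eq5}.

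For \eqref{eq:7} I would use the standard Taylor expansion argument. The implicit function theorem upgrades $\alpha\mapsto X(\alpha)$ to $C^2$, so $\alpha\mapsto m_\alpha^j(x)$ is twice continuously differentiable with derivatives uniformly dominated in $x$. Consistency puts $\hat\alpha_N^j$ in the interior of $\Theta$ eventually, so the first-order condition $\dot M_N^j(\hat\alpha_N^j)=0$ holds and a mean-value expansion gives
\begin{equation*}
0 = \dot M_N^j(\alpha_j^*) + \ddot M_N^j(\bar\alpha_N)\,(\hat\alpha_N^j-\alpha_j^*),\qquad \bar\alpha_N\in[\hat\alpha_N^j,\alpha_j^*].
\end{equation*}
Substituting \eqref{eq:eq2},
\begin{equation*}
\dot M_N^j(\alpha_j^*) = -\frac{2}{N}\sum_{i=1}^N \epsilon_{j,i}\,\dot T_{\alpha_j^*}^{2}\bigl(X_i(\alpha_j^*),\tilde f(X_i(\alpha_j^*))\bigr),
\end{equation*}
so the multivariate CLT applied to these iid centered summands gives $\sqrt N\,\dot M_N^j(\alpha_j^*)\cvloi \mathcal{N}(0,4\sigma^2\,\mathbb E[\dot T_{\alpha_j^*}^{2}\dot T_{\alpha_j^*}^{2\textbf{T}}])$. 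A uniform LLN on a neighborhood of $\alpha_j^*$, together with the fact that the $\epsilon_{j,i}$-linear contribution to $\ddot M_N^j$ vanishes in probability, yields $\ddot M_N^j(\bar\alpha_N)\to V_{\alpha_j^*}=2\mathbb{E}[\dot T_{\alpha_j^*}^{2}\dot T_{\alpha_j^*}^{2\textbf{T}}]$ in probability. Slutsky's lemma then produces the covariance $\Gamma_{\alpha_j^*}=V_{\alpha_j^*}^{-1}\cdot 4\sigma^2\mathbb{E}[\dot T_{\alpha_j^*}^{2}\dot T_{\alpha_j^*}^{2\textbf{T}}]\cdot V_{\alpha_j^*}^{-1}=2\sigma^2 V_{\alpha_j^*}^{-1}$ of \eqref{eq:eq8}.

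The main technical obstacle is the quantitative control of the implicit map $\alpha\mapsto X(\alpha)$: every step above (continuity for the ULLN, $C^2$ smoothness for the Taylor expansion, and an integrable envelope for the Hessian on a neighborhood of $\alpha_j^*$) reduces to a uniform lower bound on $\partial T_\alpha^1/\partial u$ over $\Theta\times[0,1]$. This is precisely the role of the class $\mathcal{F}_{\theta_0}$: the admissibility constraint $f'(0)>-\cot\theta_0$ combined with $f'<0$ keeps the rotated tangent bounded away from vertical uniformly in $\alpha$, providing the required invertibility and hence the desired regularity of $\alpha\mapsto m_\alpha^j$.
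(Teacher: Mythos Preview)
Your proposal is correct and follows essentially the same route as the paper: both use the implicit function theorem to obtain the regularity of $\alpha\mapsto X(\alpha)$, verify the Glivenko--Cantelli property and identifiability to apply Theorem~5.7 of \cite{vdV00} for consistency, and then invoke the standard $M$-estimator CLT for nonlinear least squares. The only cosmetic difference is that for part~(ii) the paper cites Theorem~5.23 and Example~5.27 of \cite{vdV00} directly (checking the Lipschitz envelope $|m_{\alpha_1}-m_{\alpha_2}|\leq\dot\phi\,\|\alpha_1-\alpha_2\|$), whereas you unpack the same argument via an explicit Taylor expansion of $\dot M_N^j$; your added discussion of why $\mathcal{F}_{\theta_0}$ guarantees the uniform lower bound on $\partial_u T_\alpha^1$ is in fact more detailed than the paper's treatment.
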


\subsubsection{Estimation when $\tilde{C}$ is observed on $\mathcal{D}$}
\label{sec:142}

In this section, we consider the case where the reference curve $\tilde{C}$ is observed on the same grid $\mathcal{D}:=(X_{i})_{i=1,..,N}$ as the other curve $C_{j}$, i.e $\tilde{C}=\begin{pmatrix} x \\ \tilde{f}(x) \end{pmatrix}_{x \in D}$. By applying the transformation $T_{\alpha}$ to $\tilde{C}$, the transformed pattern $T_{\alpha}\tilde{C}$ is no longer observable on $\mathcal{D}$. As a consequence, one must make use of an approximation process over $\tilde{f}$. Let $\tilde{f}_{N}$ be the linear interpolate of $\tilde{f}$, defined by:\\

\begin{equation}
    \tilde{f}_{N}(x)=\sum_{i=1}^{N}\Delta_{i}(x)\mathds{1}_{x\in[X_{(i)},X_{(i+1)})},
\label{eq:eq9}
\end{equation}

where
\begin{equation}
    \Delta_{i}(x)=\frac{\tilde{f}(X_{(i+1)})-\tilde{f}(X_{(i)})}{X_{(i+1)}-X_{(i)}}\ x\ +\ \tilde{f}(X_{(i)})-\frac{\tilde{f}(X_{(i+1)})-\tilde{f}(X_{(i)})}{X_{(i+1)}-X_{(i)}}X_{(i)}.\\
\label{eq:eq10}    
\end{equation}

It is easy to see that $\tilde{f}_{N}$ belongs also to $\mathcal{F}_{\theta_{0}}$. Replacing $\tilde{C}$ by $\hat{C}$ in (\ref{eq:eq3}) we obtain\\

\begin{center}
$\hat{M}^{j}_{N}(\alpha)=\frac{1}{N}\sum_{i=1}^{N}(f_{j}(X_{i})-T_{\alpha}^{2}(X_{i}(\alpha,N),\tilde{f}_{N}(X_{i}(\alpha,N))))^{2}$,\\
\end{center}

where $X_{i}(\alpha,N)$ is the solution to the equation $T_{\alpha}^{1}(u,\tilde{f}_{N}(u))=X_{i}$.\\

Using the linear interpolate defined by (\ref{eq:eq9}), we show the consistency and asymptotic normality of our $M$-estimator defined as follow:

\begin{equation}
\hat{\hat{\alpha}}^{j}_{N}=\underset{\alpha \in \Theta}{\mathrm{arg min}}\ \hat{M}^{j}_{N}(\alpha).\\
\label{eq:eq32}
\end{equation}

\begin{theorem}
Assume that A1, A2, A3 and A5 are satisfied. Let $\tilde{f}_{N}$ be defined by (\ref{eq:eq9}) and assume that $\exists C>0$ s.t $\forall x\in[0,1]$,  $\tilde{f}_{N}'(x)\leq C$, and $\tilde{f}'(x)\leq C$,  then

\begin{align}
\begin{split}\label{eq:eq33}
    i)\ {}& \hat{\hat{\alpha}}_{N}^{j} \xrightarrow[N \to +\infty]{\mathbb{P}} \alpha_{j}^{*},\\
\end{split}\\
\begin{split}\label{eq:34}
    ii)\ {}& \sqrt{N}(\hat{\hat{\alpha}}_{N}^{j}   -\alpha_{j}^{*}) \xrightarrow[N\to + \infty]{\mathcal{L}} \mathcal{N}(0,\Gamma_{\alpha_{j}^{*}})\\
\end{split}
\end{align}
with $\Gamma_{\alpha_{j}^{*}}$ such defined in (\ref{eq:eq8}).
\end{theorem}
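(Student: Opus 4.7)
The plan is to reduce the interpolated case to Theorem 1 by showing that replacing $\tilde f$ by its linear interpolate $\tilde f_N$ contributes a negligible perturbation, both at the level of the contrast (for consistency) and at the level of the score at $\alpha_j^{*}$ (for asymptotic normality).

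First I would control the deterministic approximation error. Under the Lipschitz bound $\tilde f'\le C$, on each subinterval $[X_{(i)},X_{(i+1)}]$ the linear interpolate satisfies $|\tilde f_N(x)-\tilde f(x)|\le C\,(X_{(i+1)}-X_{(i)})$. Since the $X_i$'s are i.i.d.\ uniform, the maximum spacing satisfies $\max_i(X_{(i+1)}-X_{(i)})=O_{\mathbb P}(\log N/N)$, so $\|\tilde f_N-\tilde f\|_\infty = O_{\mathbb P}(\log N/N)$. The assumption $\tilde f_N'\le C$ gives an analogous Lipschitz bound on $\tilde f_N$ uniformly in $N$. Next I would use the implicit function theorem on $T_\alpha^1(u,g(u))=x$: the partial derivative of $T_\alpha^1$ with respect to $u$ is bounded away from $0$ uniformly in $\alpha\in\Theta$ for any admissible $g\in\mathcal F_{\theta_0}$ (this is where A1, A2 and the definition of $\mathcal F_{\theta_0}$ with the $-\cot\theta_0$ bound enter), so $X_i(\alpha,N)-X_i(\alpha)=O_{\mathbb P}(\log N/N)$ uniformly in $i$ and in $\alpha\in\Theta$.

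From here, a Lipschitz argument on the smooth map $T_\alpha^2(u,v)$ and on $\tilde f$ yields
$$\sup_{\alpha\in\Theta}\max_{1\le i\le N}\bigl|T_\alpha^2(X_i(\alpha,N),\tilde f_N(X_i(\alpha,N)))-T_\alpha^2(X_i(\alpha),\tilde f(X_i(\alpha)))\bigr|=O_{\mathbb P}\!\bigl(\log N/N\bigr),$$
and hence $\sup_{\alpha\in\Theta}\bigl|\hat M_N^{\,j}(\alpha)-M_N^{\,j}(\alpha)\bigr|=o_{\mathbb P}(1)$ after expanding the square and using that $m_\alpha^j$ is uniformly bounded on $\Theta$. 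Combined with the uniform convergence of $M_N^{\,j}$ to its deterministic limit (already established in the proof of Theorem 1~i)), this yields $\sup_\alpha|\hat M_N^{\,j}(\alpha)-M^{\,j}(\alpha)|\to 0$ in probability, and the usual argmin-continuity theorem gives $\hat{\hat\alpha}_N^{\,j}\xrightarrow{\mathbb P}\alpha_j^{*}$.

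For asymptotic normality I would follow the standard $M$-estimator route: write the first-order condition $\nabla\hat M_N^{\,j}(\hat{\hat\alpha}_N^{\,j})=0$ and Taylor-expand around $\alpha_j^{*}$, obtaining
$$\sqrt N(\hat{\hat\alpha}_N^{\,j}-\alpha_j^{*})=-\bigl[\nabla^2\hat M_N^{\,j}(\bar\alpha_N)\bigr]^{-1}\sqrt N\,\nabla\hat M_N^{\,j}(\alpha_j^{*}),$$
for some $\bar\alpha_N$ between $\hat{\hat\alpha}_N^{\,j}$ and $\alpha_j^{*}$. Theorem 1~ii) already gives the CLT for $\sqrt N\,\nabla M_N^{\,j}(\alpha_j^{*})$ and the in-probability limit of $\nabla^2 M_N^{\,j}$. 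It therefore suffices to prove two approximation statements: $\nabla^2\hat M_N^{\,j}(\bar\alpha_N)-\nabla^2 M_N^{\,j}(\bar\alpha_N)=o_{\mathbb P}(1)$, which follows from the uniform approximation above applied to the second derivatives; and the crucial
$$\sqrt N\bigl(\nabla\hat M_N^{\,j}(\alpha_j^{*})-\nabla M_N^{\,j}(\alpha_j^{*})\bigr)=o_{\mathbb P}(1).$$
This is the main obstacle. Expanding the gradient difference term by term and invoking the interpolation rate $O_{\mathbb P}(\log N/N)$ for $\tilde f_N-\tilde f$, for $X_i(\alpha_j^{*},N)-X_i(\alpha_j^{*})$, and (after differentiating the implicit equation) for $\partial_\alpha X_i(\alpha_j^{*},N)-\partial_\alpha X_i(\alpha_j^{*})$, each term is dominated by a finite sum of averages of order $\log N/N$, multiplied by quantities bounded in $L^2$. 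Hence $\sqrt N$ times the gradient difference is $O_{\mathbb P}(\log N/\sqrt N)=o_{\mathbb P}(1)$. Plugging this into the Taylor expansion recovers the same asymptotic variance $\Gamma_{\alpha_j^{*}}=V_{\alpha_j^{*}}^{-1}2\sigma^2$ as in Theorem 1, proving the claim.
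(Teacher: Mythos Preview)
Your consistency argument is essentially the paper's: split $\sup_\alpha|\hat M_N-M|$ via $M_N$, bound $|\hat M_N-M_N|$ by a constant times the maximal spacing $\max_k(X_{(k+1)}-X_{(k)})$, and invoke Theorem~3 for the other piece. The only difference is cosmetic: you quote the sharp spacing rate $O_{\mathbb P}(\log N/N)$, whereas the paper proves and uses its own Lemma~1, which gives only $o_{\mathbb P}(N^{-1/2})$ but is already enough.

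For part~ii) you take a genuinely different route from the paper, and there is a gap. The paper does \emph{not} Taylor-expand $\hat M_N$; instead it writes $\sqrt N(\hat{\hat\alpha}_N-\alpha^*)=\sqrt N(\hat{\hat\alpha}_N-\hat\alpha_N)+\sqrt N(\hat\alpha_N-\alpha^*)$, borrows the CLT for the second term from Theorem~3, and kills the first term using only the sup-norm bound $\sqrt N\sup_\alpha|\hat M_N-M_N|\le K\sqrt N\max_k(X_{(k+1)}-X_{(k)})=o_{\mathbb P}(1)$ from Lemma~1. No derivatives of $\hat M_N$ are ever needed. Your Taylor-expansion strategy, by contrast, requires $\hat M_N$ to be $C^2$ in $\alpha$ and requires $\nabla\hat M_N(\alpha^*)-\nabla M_N(\alpha^*)=o_{\mathbb P}(N^{-1/2})$. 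Both points are problematic: since $\tilde f_N$ is piecewise linear, $\tilde f_N'$ has jumps at the design nodes, so $\alpha\mapsto X_i(\alpha,N)$ and hence $\nabla\hat M_N$ are only piecewise $C^1$, and $\nabla^2\hat M_N(\bar\alpha_N)$ need not exist. Moreover, after differentiating the implicit equation your gradient difference contains the term $\tilde f_N'(X_i(\alpha^*,N))-\tilde f'(X_i(\alpha^*))$; the hypotheses give only a uniform bound on $\tilde f'$ and $\tilde f_N'$, not continuity of $\tilde f'$, so this difference is not $O_{\mathbb P}(\log N/N)$ in general. The paper's decomposition via $\hat\alpha_N$ avoids all of this by never differentiating the interpolated contrast.
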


\section{Simulations and applications}
\label{sec:2}
In this section we illustrate the method on numerical applications. The first subsection is dedicated to some simulated toy example while the second to a real problem. The optimisation problems (\ref{eq:eq4}) and (\ref{eq:eq32}) will be numerically solved by using the BFGS algorithm \cite{NoceWrig06}.

\subsection{Simulated toy example}
\label{sec:21}

We consider the following model:\\
\begin{equation*}
f_{\lambda,\theta}(x)=\lambda[(x-1)\sin{\theta}+g(x)\cos{\theta}],    
\end{equation*}

with $g(x)=2(\cos{\pi x}+1)$. We observe $f_{\lambda_{j},\theta_{j}}(x_{i})$ for $i=1,...,100$, for $J=25$ values of $(\lambda,\theta)$ with some iid errors $\epsilon_{ij}$.
\begin{itemize}
    \item[$\bullet$] The observations points $x_{i},\ i=1,...,100$ are iid random variables with uniform distribution on $[0,1]$.
    \item[$\bullet$] The parameters are chosen randomly with the following arbitrary distribution $(\lambda,\theta) \hookrightarrow \mathcal{U}([0,15])\times \mathcal{U}([-\frac{\pi}{3},\frac{\pi}{30}])$.
    \item[$\bullet$] The errors are assumed to be $\mathcal{N}(0,0.01)$.
\end{itemize}

Results are given in Figure \ref{label-simu}. The simulated data are shown in Figure \ref{label-simu} (a). Each of these curves is rescaled to $[0,1]$ to avoid numerical issues. The curve in blue is the reference curve. After the estimation of the parameters $\lambda_{j}$ and $\theta_{j}$, all curves can be rescaled back to their original space as shown in Figure \ref{label-simu} (b). \\

Rescaling the curves allows to easily choose the initial point of parameters for the optimization algorithm taking 1 for $\lambda$ and 0 for $\theta$.\\

\begin{figure}[h!]
\centering
\subfigure[]{\includegraphics[width=6cm]{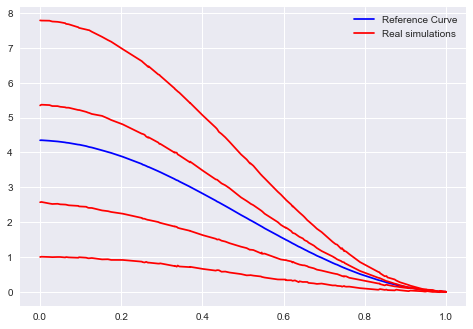}}\subfigure[]{\includegraphics[width=6cm]{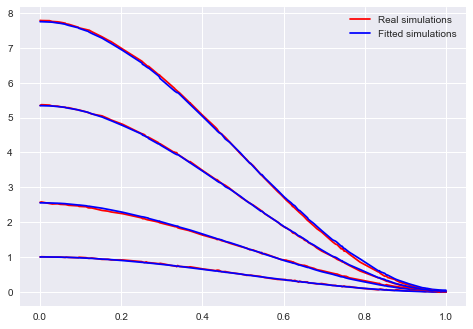}}
\caption{Rotating and scaling results for simulated data: (a) Simulated data; (b) Fitted data}
\label{label-simu}
\end{figure}

\subsection{Aeronautic loads}
\label{sec:22}

An airframe structure is a complex system and its design is a complex task involving today many simulation activities generating massive amounts of data. This is, for example, the process of loads and stress computations of an aircraft. That is the computations of the forces and the mechanical strains suffered by the structure. The overall process exposed in Figure \ref{label-process} is run to identify load cases (i.e aircraft mission and configurations: maneuvers, speed, loading, stiffness...), that are critical in terms of stress endured by the structure and, of course, the parameters which make them critical. The final aim is to size and design the structure (and potentially to reduce loads in order to reduce the weight of the structure). Typically for an overall aircraft structure, millions of load cases can be generated and for each of these load cases millions of structural responses (i.e how structural elements react under such conditions) have to be computed. As a consequence, computational times can be significant.\\

\begin{figure}[h!]
\centering
\includegraphics[width=10cm]{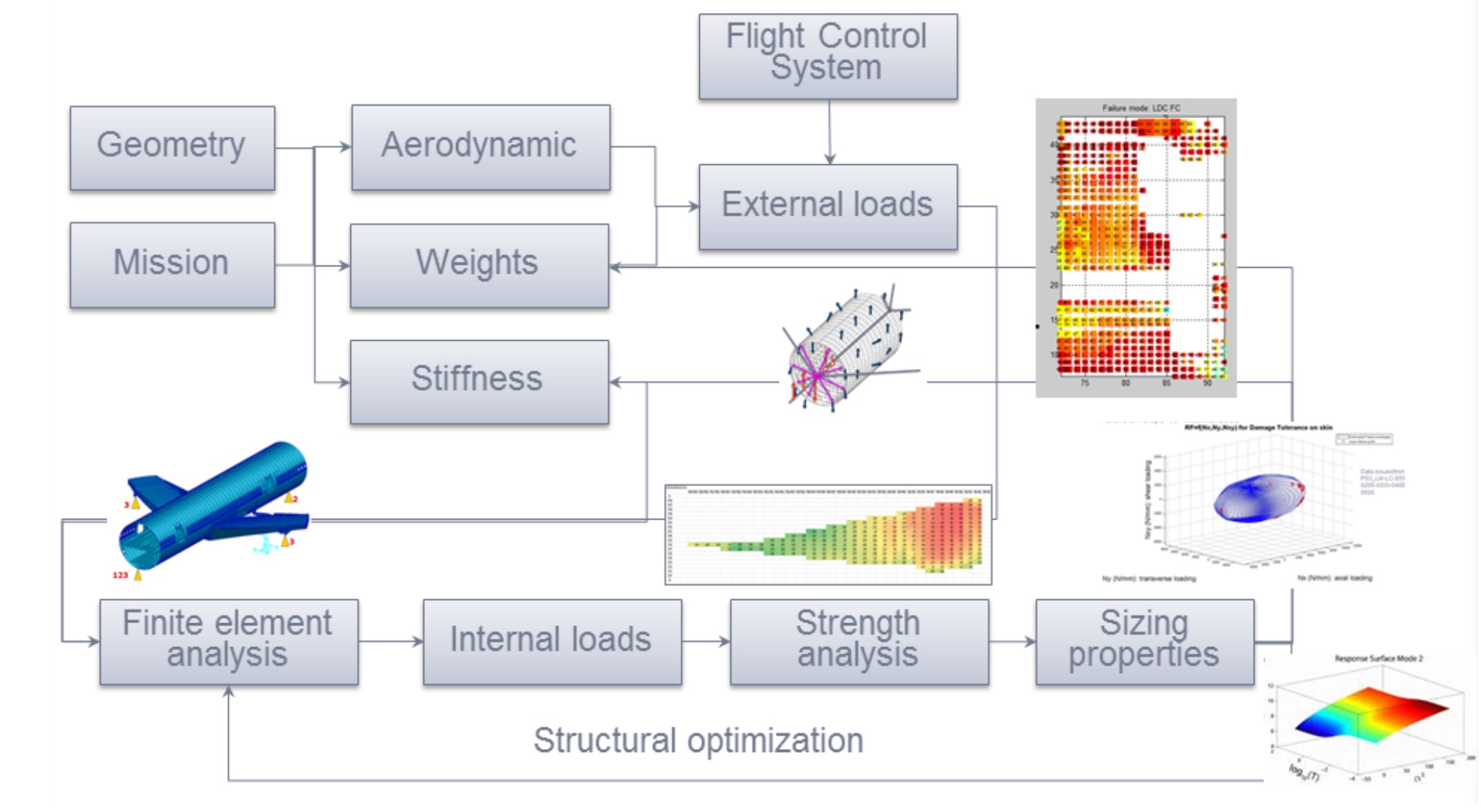}
\caption{Flowchart for loads and stress analysis process}
\label{label-process}
\end{figure}

In an effort to continuously improve methods, tools and ways-of-working, Airbus has invested a lot in digital transformation and the development of infrastructures allowing to treat data (newly or already produced). The main industrial challenge for Airbus is to reduce lead time in the computation and preliminary sizing of an airframe as well as extracting value from already calculated loads. In this paper, we focus on the external loads of a wing: for each load case are calculated the shear forces (transverse forces near to vertical arising from aerodynamic pressure and inertia) and bending moments (resulting from the shear forces, they represent the flexion of the wing) such as shown in Figure \ref{label-BM}.

\begin{figure}[h!]
\centering
\includegraphics[width=5.5cm]{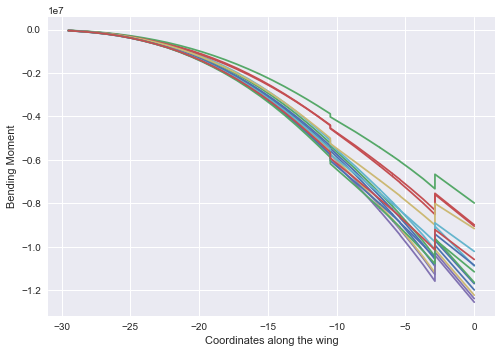}\includegraphics[width=5.5cm]{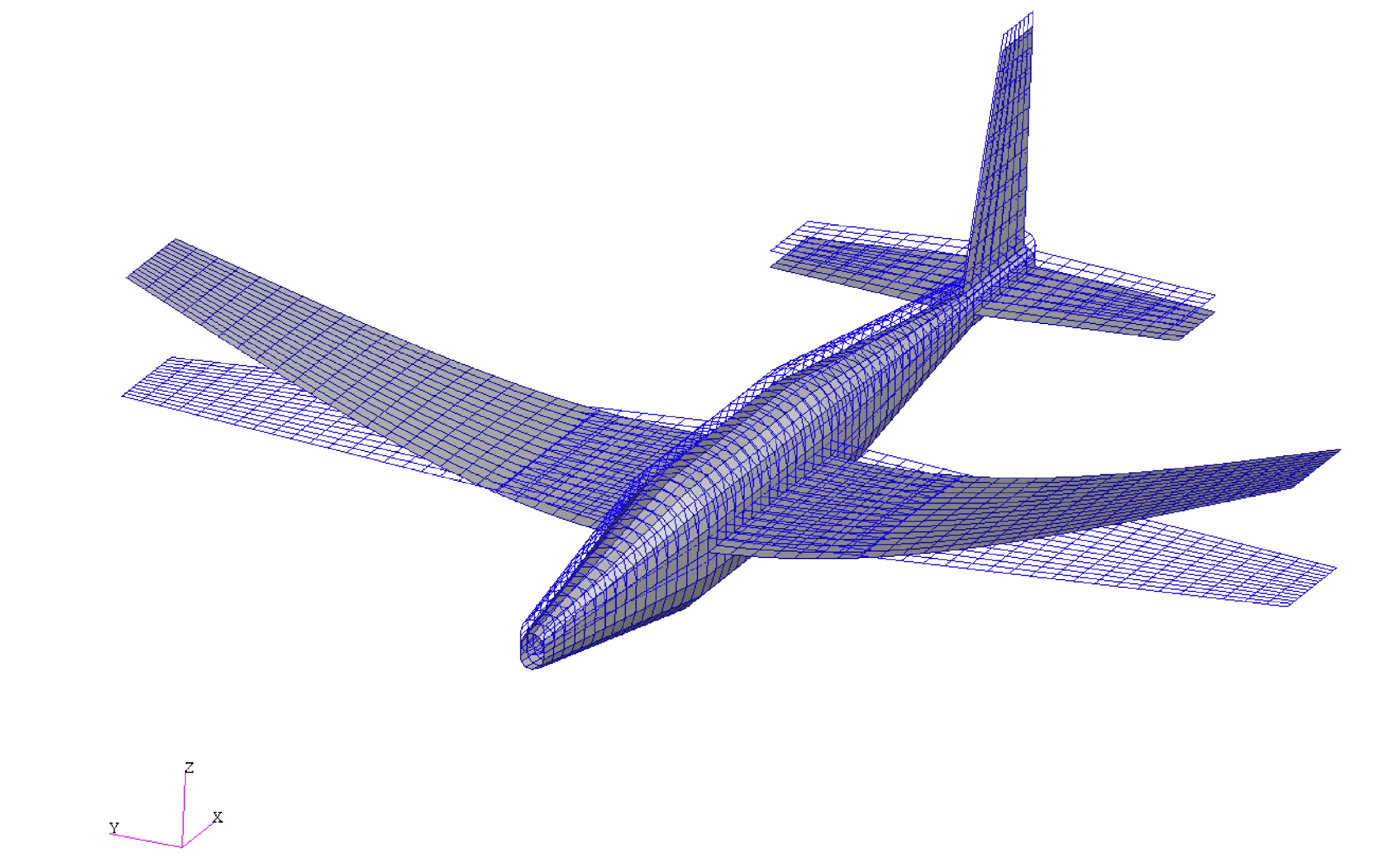}\\
\caption{(a) Examples of bending moments along the wing for different load cases - (b) Finite element model of a generic aircraft representing the wing deformation \cite{ritter01}}
\label{label-BM}
\end{figure}

These external loads appeared to be extremely regular and one can legitimately suppose that it exists a link between all those curves. Indeed, it is natural to assume that it exists a reference bending moment (a reference curve) which can be morphed through a deformation model to give all the other curves.\\

In \cite{fournier}, the authors present an aeronautic model that computes the loads (forces and moments) on the wing of some aircraft model denoted by $ACM1$. They present several statistical methods in order to study these data. In this section, we will compare the method used in \cite{fournier} with the model presented in Section \ref{sec:1} for a new aircraft model called $ACM2$. The data at our disposal represents bending moments of a wing (representing its flexion) of an aircraft calculated for 1152 different configurations (load cases). Each configuration is defined by 28 features (speed of the aircraft, mass, altitude, quantity of fuel, etc.), leading to a bending moment calculated on 45 stations along the wing. In a more formal way, we observe the couple $(X_{j},Y_{j})_{j=1,...,1152}$, where $X_{j}=(X_{j}^{1},...,X_{j}^{28})$ are the features and $Y_{j}=(Y_{j}^{1},...,Y_{j}^{45})$ is the bending moment. The idea is to predict the bending moment for different configurations. The data are represented in Figure \ref{label-210}.\\

\begin{figure}[h!]
\centering
\includegraphics[width=6.5cm]{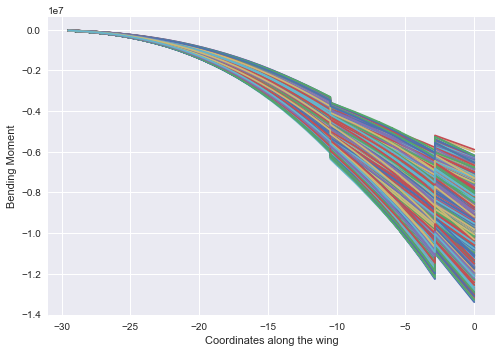}\\
\caption{Representation of all the bending moments of a wing of our data base: the wing root is located at the null coordinate, where the strains are maximum when the wing bends.}
\label{label-210}
\end{figure}

Due to the discontinuities at the $3^{rd}$ and $20^{th}$ stations, we apply our methodology to each section independently. Then, each section can be represented by its minimum and maximum values, and by its rotation and scaling coefficients $\lambda_{j}$ and $\theta_{j}$. Figure \ref{label-matched} assess the quality of the matching process (the reference curve used is the average bending moment).\\

\begin{figure}[h!]
\centering
\includegraphics[width=6.5cm]{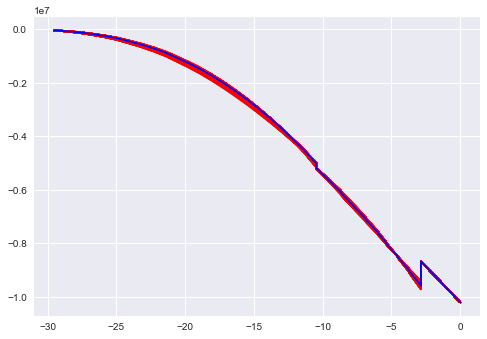}\\
\caption{Results of the matching process}
\label{label-matched}
\end{figure}

Thus the dimensional space of the outputs is reduce to 12 instead of 45. We compare our method to three other methods of \cite{fournier} applied on the outputs: no transformation (we call it raw - we build 45 models, one per station); a PCA (the three first principal components represent 99,9\% of the explained variance - 3 models instead of 45); a polynomial fitting per section (of degree 4 for the first section, of degree 2 for the second and of degree 1 for the third section) which leads to 10 models instead of 45. The Table \ref{table:tabout} sums up the number of outputs to predict depending on the method used.\\

\begin{table}[h!]
\centering
\caption{Number of outputs to be predicted depending on the method used on the raw outputs: Raw, Deformation Model, PCA, polynomial fitting}
        \begin{tabular}{|c|c|c|c|c|c|c|}
          \hline
         & Number of outputs & Names of outputs \\
          \hline
          \hline
        Raw & 45 & Bending moment value at \\
        & &station 0 to 44 \\
           \hline
        Deformation & 12 & $\theta_{1}, \theta_{2}, \theta_{3}, \lambda_{1}, \lambda_{2}, \lambda_{3}, min_{1}, min_{2}, $\\
        Model & &$min_{3}, max_{1}, max_{2}, max_{3}$  \\
           \hline
        PCA & 3 & Principal components 1 to 3 \\
            \hline
        Polynomial fitting & 10 & Coefficients of polynomials \\
            \hline
        \end{tabular}\\
    \label{table:tabout}
\end{table}

The significant advantage of the reduction dimension techniques used is that the response of the model would have a physical form contrary to the simple linear models performed on the raw data. To build our models, we use the Orthogonal Greedy Algorithm (OGA) also known as the Matching Pursuit Algorithm. Detailed explanations can be found in \cite{barron2008}, \cite{sancetta2016} and \cite{mallat93}. Roughly speaking, we consider the problem of approximating a function by a sparse linear combination of inputs.\\

To assess the goodness of fit of our models, we defined for a curve of bending moment $j$ the error rate as follows:\\

\begin{center}
$error(j)=\sqrt{\frac{\sum_{i=1}^{45}(\hat{y}(x_{i})-y_{j}(x_{i}))^{2}}{\sum_{i=1}^{45}y_{j}^{2}(x_{i})}}$, $j=1,...,n_{test}$,\\
\end{center}

where $n_{test}$ is the size of the sample of test. We compute the error rates on (the sample of test is of 25\% the size of the total database). It gives an idea of how far our predictions are. For this standpoint, we can easily compute the empirical cumulative distribution function (CDF): $\forall\ j=1,...,n_{test}$, let $\alpha \in [0,1]$. The empirical CDF is defined as:\\

\begin{center}
    $\alpha \to G(\alpha)=\frac{1}{n}\sum_{j=1}^{n_{test}}\mathds{1}_{(error(j)\leq\alpha)}$
\end{center}

In Table \ref{table:tab6}, we give the values of $G(\alpha)$ for $\alpha=1\%,2\%,5\%,10\%$ and the mean error. In Figure \ref{label-CDF} we give the plots of the function $G(\alpha)$ for the different methods.\\

\begin{table}[h!]
\caption{Average estimated $\mathbb{P}(error\leq 1\%)$, $\mathbb{P}(error\leq 2\%)$, $\mathbb{P}(error\leq 5\%)$ $\mathbb{P}(error\leq 10\%)$, $\mathbb{E}(error)$ calculated on several random test data set (25\% of the size of the total dataset)}
\makebox[\textwidth][c]{
\begin{tabular}{|c|c|c|c|c||}
  \hline
 &Deformation Model &Polynomial Fitting &PCA &Raw\\
 \hline
 \hline
$\mathbb{P}(error\leq 1\%)$  & 17\% & 14\% &16\% &15\%  \\
 \hline
$\mathbb{P}(error\leq 2\%)$  & 45\% & 45\% &43\% &51\%  \\
 \hline
 $\mathbb{P}(error\leq 5\%)$  & 88\% & 88\% &86\% &88\%  \\
 \hline
$\mathbb{P}(error\leq 10\%)$  & 98\% &97\%  &95\% &98\% \\
 \hline
$\mathbb{E}(error)$  & 2.9\% & 2.9\% & 3\% & 2.8\%\\
 \hline
 \hline
\end{tabular}}
\label{table:tab6}
\end{table}

\begin{figure}[h!]
\centering
\includegraphics[width=6.7cm]{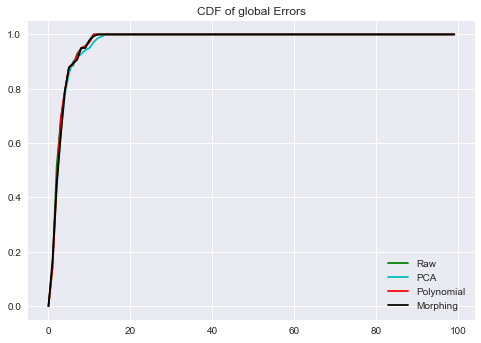}\\
\caption{Empirical CDF of error rates ($\mathbb{P}(error\leq \alpha)$)}
\label{label-CDF}
\end{figure}

Concerning the approximation and prediction of loads, our model is equivalent in average to other tested methods, there are just slightly more observations with an error below 1\%. Nevertheless, in our case, the linear models built through the deformation model are sparser than the other. Indeed, in average, 11 variables are chosen as optimal parameter of the greedy algorithm by cross-validation for the deformation model, 13 for the polynomial fitting, 15 for the PCA and 14 for the raw outputs one.\\

Even though the prediction of loads with the deformation model is so likely equivalent to none transformation, it obtains better results than the polynomial fitting and the PCA. Besides, it is important to notice that it gives to engineers a physical interpretation and idea of how react the wing to new constraints. Besides, using this deformation model gives a physical response contrary to a simple linear model per station whose response could be irregular.\\

\section{Proofs and technical result}
\label{sec:3}
\subsection{Technical result}
\label{sec:31}
This section is dedicated to the technical result used in the proof of Theorem 4.\\

\begin{lemma}Let $X_{1},...,X_{N}$ be $N$ independent and identically distributed random variables with uniform distribution on $[0,1]$ and let $X_{(1)}\leq...\leq X_{(N)}$ be the reordered sample . Let $a_{N}=O(\sqrt{N})$, then:

\begin{equation*}
\mathbb{P}(a_{N}\underset{j}{\sup}|X_{(j+1)}-X_{(j)}|\geq \epsilon)\xrightarrow[N \to +\infty]{} 0.
\end{equation*}

\end{lemma}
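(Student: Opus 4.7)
The plan is to obtain a non-asymptotic tail bound for the maximum spacing of uniform order statistics and then observe that, when $a_N = O(\sqrt{N})$, this tail bound tends to $0$. Write $C$ for a constant with $a_N \le C\sqrt{N}$ for all $N$ large enough, and fix $\varepsilon>0$. The goal is to show that
\[
\mathbb{P}\!\left(\sup_{j}|X_{(j+1)}-X_{(j)}|\ge \varepsilon/a_N\right)\longrightarrow 0.
\]

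The key step is a standard covering argument. Set $\delta_N = \varepsilon/(2 a_N)$ and partition $[0,1]$ into $M_N = \lceil 1/\delta_N\rceil \le 2 a_N/\varepsilon + 1$ consecutive subintervals $I_1,\dots,I_{M_N}$ of length at most $\delta_N$. If every $I_k$ contains at least one of the $X_i$'s, then any spacing $X_{(j+1)}-X_{(j)}$ is at most the combined length of two consecutive $I_k$'s, hence at most $2\delta_N = \varepsilon/a_N$. Contrapositively, on the event $\{\sup_j |X_{(j+1)}-X_{(j)}|>\varepsilon/a_N\}$, at least one interval $I_k$ is empty. A union bound, together with $\mathbb{P}(X_1\notin I_k)\le 1-\delta_N$ and $1-x\le e^{-x}$, gives
\[
\mathbb{P}\!\left(\sup_{j}|X_{(j+1)}-X_{(j)}|>\varepsilon/a_N\right)\le M_N (1-\delta_N)^N \le \left(\frac{2 a_N}{\varepsilon}+1\right)\exp\!\left(-\frac{N\varepsilon}{2 a_N}\right).
\]

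To finish, I plug in $a_N \le C\sqrt{N}$: the prefactor grows at most like $\sqrt{N}$, while the exponent satisfies $N\varepsilon/(2 a_N)\ge \sqrt{N}\,\varepsilon/(2C)$, so the exponential decay dominates the polynomial prefactor and the right-hand side tends to $0$. This yields the claimed convergence in probability.

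I do not expect a substantial obstacle: the only mildly delicate point is making sure the covering really controls \emph{every} internal spacing (handled by using two adjacent intervals of length $\delta_N$ to bound a spacing of length at most $2\delta_N$), and that $N/a_N\to\infty$ fast enough to beat the $O(a_N)$ prefactor, which is immediate from $a_N=O(\sqrt{N})$. Note that the same argument would actually allow $a_N=o(N/\log N)$, so the hypothesis $a_N=O(\sqrt{N})$ leaves ample room.
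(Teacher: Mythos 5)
Your proof is correct, but it follows a genuinely different route from the paper's. The paper uses the classical representation of uniform order statistics as normalized partial sums of i.i.d.\ exponential variables to get the exact tail of a single spacing, $\mathbb{P}(X_{(j+1)}-X_{(j)}\geq \epsilon)=(1-\epsilon)^{N-1}$ (the exponent should really be $N$, but this is immaterial), and then applies a union bound over the $N$ spacings to obtain $N(1-\epsilon/a_N)^{N-1}\to 0$. You instead use a covering/pigeonhole argument: partition $[0,1]$ into $O(a_N)$ bins of length $\approx \epsilon/(2a_N)$, observe that a spacing exceeding $\epsilon/a_N$ forces some bin to be empty (since consecutive order statistics separated by a full bin would contradict consecutiveness), and union bound over the bins. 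Your approach is more elementary (no distributional identity for spacings is needed), the union bound runs over only $O(a_N)$ events rather than $N$, and, as you note, it transparently yields the stronger admissible range $a_N=o(N/\log N)$; it would also extend to non-uniform designs with density bounded below. The paper's approach buys an exact spacing distribution with essentially no combinatorial bookkeeping. One small point to tidy in your write-up: you declare the $I_k$ to have length \emph{at most} $\delta_N$ but then use $\mathbb{P}(X_1\notin I_k)\le 1-\delta_N$, which requires length \emph{at least} $\delta_N$; taking $M_N=\lfloor 1/\delta_N\rfloor$ equal bins of length $1/M_N\in[\delta_N,2\delta_N)$ (and noting a spacing is then at most $2/M_N<4\delta_N$, so one should start from $\delta_N=\epsilon/(4a_N)$, say) fixes this at the cost of a constant in the exponent and does not affect the conclusion.
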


\begin{proof}[Proof of Lemma 1] Let $Z_{1},...,Z_{N+1}$ be $N$ independent and identically distributed random variables with exponential distribution with parameter $1$. It is a well known fact that \\
$(\frac{Z_{1}}{\sum_{k=1}^{N+1}Z_{k}},\frac{Z_{1}+Z_{2}}{\sum_{k=1}^{N+1}Z_{k}},...,\frac{Z_{1}+...+Z_{N}}{\sum_{k=1}^{N+1}Z_{k}})\overset{(\mathcal{L})}{=}(X_{(1)},...,X_{(N)})$ and we have 
\begin{equation}
X_{(j+1)}-X_{(j)}\overset{(\mathcal{L})}{=}\frac{Z_{j+1}}{\sum_{k}Z_{k}}.
\label{eq:ordered}
\end{equation}

Now, for $\epsilon >0$

\begin{equation*}
\begin{split}
\mathbb{P}(\underset{j}{\sup}|X_{(j+1)}-X_{(j)}|\geq \epsilon) &\leq \sum_{j} \mathbb{P}(X_{(j+1)}-X_{(j)}\geq \epsilon)\\
&\leq N \max_{j} \mathbb{P}(X_{(j+1)}-X_{(j)}\geq \epsilon).\\
\end{split}
\end{equation*}

By using (\ref{eq:ordered}), we have
\begin{equation*}
\begin{split}
\mathbb{P}(X_{(j+1)}-X_{(j)}\geq \epsilon)=(1-\epsilon)^{N-1}.\\
\end{split}    
\end{equation*}

Then,

\begin{equation*}
\begin{split}
\mathbb{P}(\underset{j}{\sup}|X_{(j+1)}-X_{(j)}|\geq \epsilon) &\leq N(1-\epsilon)^{N-1}.\\
\end{split}    
\end{equation*}

The result follows replacing $\epsilon$ by $\frac{\epsilon}{a_{N}}$ and letting $N \to +\infty$.
\end{proof}

\subsection{Proofs of Theorems 3 and 4}
\label{sec:32}

\begin{proof}[Proof of Theorem 3] To ease the notation, we do not display the dependency in $j$.\\

i) By (\ref{eq:eq3}) it is easy to see that $M_{N}(\alpha)$ is an empirical mean of iid bounded random variables. Thus, by the Strong Law of Large Number (SLLN)
\begin{center}
    $M_{N}(\alpha) \xrightarrow[N \to +\infty]{p.s} M(\alpha)$,
\end{center}
with $M(\alpha)=\mathbb{E}[\epsilon^{2}]+ \mathbb{E}[(T_{\alpha}^{2}(X(\alpha),\tilde{f}(X(\alpha)))-T_{\alpha^{*}}^{2}(X(\alpha^{*}),\tilde{f}(X(\alpha^{*}))))^{2}]$.\\

$M(\alpha)$ is continuous and has an obvious unique minimum $\alpha^{*}$. Since $\Theta$ is compact, this implies that $\underset{\alpha:d(\alpha,\alpha^{*})\geq\epsilon}{\inf}\  M(\alpha) >M(\alpha^{*})$ is satisfied (see Problem 27 p. 84 in \cite{vdV00}).\\

It remains to prove that $\{m_{\alpha} : \alpha \in \Theta\}$ is a Glivenko-Cantelli class. Thanks to the remark following the proof of Theorem 5.9 in \cite{vdV00}, this is an easy consequence of the continuity of $\alpha \to m_{\alpha}$ and the fact that the function is bounded by a continuous and integrable function on $[0,1]$. Indeed, it exists at least a function $f^{*}$ in $\mathcal{F}_{\theta_{0}}$ which bounds every other functions, and two constants $K_{1} >0, K_{2}>0$ such that
\begin{center}
    $m_{\alpha}(x) \leq K_{1}(f^{*}(x)+K_{2})^{2}$,
\end{center}

and
\begin{equation}
\underset{\alpha\in\Theta}{\sup}\ |M_{N}(\alpha)-M(\alpha)|\xrightarrow[N \to +\infty]{\mathbb{P}}\ 0.\\
\label{eq:eq6}
\end{equation}

The result follows from the Theorem 5.7 in \cite{vdV00}.\\

ii) The Central Limit Theorem will be a consequence of Theorem 5.23 in \cite{vdV00}. Recall that
\begin{center}
    $m_{\alpha}(x)=[f(x)-\lambda((x(\alpha)-1)\sin \theta + \cos \theta \tilde{f}(x(\alpha)) )]^{2}$. 
\end{center}

By the Implicit Function Theorem, that is easy to see that $\alpha \to x(\alpha)$ is $C^{1}$ on a compact set. This implies that the norm of the gradient of $m_{\alpha}$ is uniformly bounded in $\alpha$. Hence $\exists \dot\phi(x) \in L^{1}$ such that $||\nabla_{\alpha}m_{\alpha}(x)||\leq \dot\phi(x)$ hence\\

\begin{center}
    $|m_{\alpha_{1}}(x)-m_{\alpha_{2}}(x)| \leq \dot\phi(x)\times ||\alpha_{1}-\alpha_{2}||.$
\end{center}

In order to give an explicit formula for the limit variance, we apply the results of Example 5.27 in \cite{vdV00} where $f_{\theta}$ becomes in our case $T_{\alpha}^{2}$ and hence, we have
\begin{equation*}
    \sqrt{N}(\hat{\alpha}_{N}^{j}   -\alpha_{j}^{*}) \xrightarrow[N\to + \infty]{\mathcal{L}} \mathcal{N}(0,\Gamma_{\alpha_{j}^{*}}),\\
\end{equation*}

with $\Gamma_{\alpha_{j}^{*}} = V_{\alpha_{j}^{*}}^{-1}2\sigma^{2}$ and $V_{\alpha_{j}^{*}}=2\mathbb{E}[\dot T_{\alpha_{j}^{*}}^{2}\dot T_{\alpha_{j}^{*}}^{2\textbf{T}}]$.

\end{proof}

\begin{proof}[Proof of Theorem 4] \ \\

i) To prove the consistency of $\hat{\hat{\alpha}}_{N}$ we have to show that\\
\begin{center}
    $\underset{\alpha\in\Theta}{\sup}|\hat{M}_{N}(\alpha)-M(\alpha)|\xrightarrow[N \to +\infty]{\mathbb{P}}\ 0$.\\
\end{center}

We have,
\begin{center}
    $\underset{\alpha\in\Theta}{\sup}|\hat{M}_{N}(\alpha)-M(\alpha)| \leq \underset{\alpha\in\Theta}{\sup}|\hat{M}_{N}(\alpha)-M_{N}(\alpha)| + \underset{\alpha\in\Theta}{\sup}|M_{N}(\alpha)-M(\alpha)|$.\\
\end{center}

It has been shown in the proof of Theorem 3 that
\begin{center}
    $\underset{\alpha\in\Theta}{\sup}|M_{N}(\alpha)-M(\alpha)|\xrightarrow[N \to +\infty]{\mathbb{P}}\ 0$.\\
\end{center} 

It remains to prove that
\begin{center}
    $\underset{\alpha\in\Theta}{\sup}|\hat{M}_{N}(\alpha)-M_{N}(\alpha)|\xrightarrow[N \to +\infty]{\mathbb{P}}\ 0$.\\
\end{center} 

To ease the notation, we write $T_{\alpha}^{2}(i,N)=T_{\alpha}^{2}(X_{i}(\alpha,N),\tilde{f}_{N}(X_{i}(\alpha,N)))$, and $T_{\alpha}^{2}(i)=T_{\alpha}^{2}(X_{i}(\alpha),\tilde{f}(X_{i}(\alpha)))$. Set

\begin{equation*}
\begin{split}
D_{N}(\alpha)&=M_{N}(\alpha)-\hat{M}_{N}(\alpha)\\
&=\frac{1}{N}\sum_{i=1}^{N}2y(X_{i})[T_{\alpha}^{2}(i,N)-T_{\alpha}^{2}(i)]-\frac{1}{N}\sum_{i=1}^{N}[T_{\alpha}^{2}(i,N)-T_{\alpha}^{2}(i)][T_{\alpha}^{2}(i,N)+T_{\alpha}^{2}(i)].\\
\end{split}    
\end{equation*}

As $f$ and $T_{\alpha}^{2}$ are continuous and bounded on $\Theta\times [0,1]$, this implies that:\\
\begin{equation*}
\begin{split}
|D_{n}(\alpha)|&\leq |\frac{1}{N}\sum_{i=1}^{N}2f(X_{i})[T_{\alpha}^{2}(i,N)-T_{\alpha}^{2}(i)]|+|\frac{1}{N}\sum_{i=1}^{N}[T_{\alpha}^{2}(i,N)-T_{\alpha}^{2}(i)][T_{\alpha}^{2}(i,N)+T_{\alpha}^{2}(i)]|\\
&\leq K(\frac{1}{N}\sum_{i=1}^{N}[T_{\alpha}^{2}(i,N)-T_{\alpha}^{2}(i)]^{2})^{\frac{1}{2}}\\
&\leq K'(\frac{1}{N}\sum_{i=1}^{N}[(X_{i}(\alpha,N)-X_{i}(\alpha))(1+C)+(\tilde{f}_{N}(X_{i}(\alpha))-\tilde{f}(X_{i}(\alpha))]^{2})^{\frac{1}{2}}.\\
\end{split}
\end{equation*}

By construction, there exists $j$ such that $X_{(j)}\leq X_{i}(\alpha) \leq X_{(j+1)}$, and $X_{(j)}\leq X_{i}(\alpha,N) \leq X_{(j+1)}$ which leads to:

\begin{equation*}
    X_{i}(\alpha,N)-X_{i}(\alpha)=\gamma(X_{(j+1)}-X_{(j)}).\\
\end{equation*}

Besides, since there exists $\gamma'>0$ such that\\
$\tilde{f}_{N}(X_{i}(\alpha))=\gamma'\tilde{f}(X_{(j+1)})+(1-\gamma')\tilde{f}(X_{(j)})$ we have

\begin{equation*}
\begin{split}
|\tilde{f}_{N}(X_{i}(\alpha))-\tilde{f}(X_{i}(\alpha))|&\leq \gamma'|\tilde{f}(X_{(j+1)})-\tilde{f}(X_{(j)})|\\
&\leq C\gamma'|X_{(j+1)}-X_{(j)}|, \\
\end{split}
\end{equation*}

and

\begin{align*}
    |D_{n}(\alpha)|&\leq K'(\frac{1}{N}\sum_{j=1}^{N}[X_{(j+1)}-X_{(j)}]^{2})^{\frac{1}{2}}  \\ 
    \underset{\alpha\in\Theta}{\sup}|D_{n}(\alpha)|&\leq K'(\frac{1}{N}\sum_{j=1}^{N}\underset{\alpha\in\Theta}{\sup}[X_{(j+1)}-X_{(j)}]^{2})^{\frac{1}{2}}\\
    &\leq K' \underset{j}{\sup}|X_{(j+1)}-X_{(j)}|.\\
\end{align*}

By Lemma 1 $\mathbb{P}(K'\underset{j}{\sup}|X_{(j+1)}-X_{(j)}|\geq \epsilon)\xrightarrow[N \to +\infty]{} 0$. Hence $D_{N}$ is bounded by an integrable and continuous function which goes to 0 in probability on $\Theta$\\

\begin{center}
    $\underset{\alpha\in\Theta}{\sup}|\hat{M}_{N}(\alpha)-M(\alpha)|\xrightarrow[N \to +\infty]{\mathbb{P}}\ 0$.\\
\end{center}

So we may conclude.\\

ii) First, we use that $\sqrt{N}(\hat{\hat{\alpha}}_{N}-\alpha^{*})=\sqrt{N}(\hat{\hat{\alpha}}_{N}-\hat{\alpha}_{N})+\sqrt{N}(\hat{\alpha}_{N}-\alpha^{*})$. By Theorem 3, $\sqrt{N}(\hat{\alpha}_{N}   -\alpha^{*}) \xrightarrow[N\to + \infty]{\mathcal{L}} \mathcal{N}(0,\Gamma_{\alpha^{*}})$ with  $\Gamma_{\alpha^{*}}$ defined in (\ref{eq:7}). It remains to prove that $\sqrt{N}(\hat{\hat{\alpha}}_{N}   -\hat{\alpha}) \xrightarrow[N\to + \infty]{\mathbb{P}} 0$.\\

Using the same arguments as in the proof i), we have\\
\begin{equation}
    \mathbb{P}(\sqrt{N}\underset{\alpha\in\Theta}{\sup}|\hat{M}_{N}(\alpha)-M_{N}(\alpha)|\geq \epsilon)\leq \mathbb{P}(K\sqrt{N}\underset{\alpha\in\Theta}{\sup}|X_{(j+1)}-X_{(j)}|\geq \epsilon)
\label{eq:lemm}
\end{equation}

The right hand side of (\ref{eq:lemm}) converges to $0$ by Lemma 1. This implies that $\sqrt{N}(\hat{\hat{\alpha}}_{N}   -\hat{\alpha}) \xrightarrow[N\to + \infty]{\mathbb{P}} 0$.\\

\end{proof}

\section{Perspectives and conclusion}

One of the main quality of our approach is that it is easy to implement and execute. The cost function being simple, we use a BFGS algorithm to find the optimal parameters, and because of the regularity of curves we deal with, the initial points for optimization can be easily defined.\\

Furthermore, the search of the coordinate of the reference curve which is sent to the coordinate of the curve to fit can be easily implemented with a simple value search.\\

Besides, the deformation parameters can be exploited through an explainable model such as the linear model used in the real world problem.\\

It seems that the deformation model is robust if the noise is controlled. An interesting extension of this work would be to study what is going on when the reference curve is noisy. A generalization of this work to less regular functions would be worthwhile. Finally, it would be interesting to include in the model a way to handle discontinuities in order to reduce the dimension and have a more global representation of the deformation.\\


\bibliographystyle{spmpsci}      
\bibliography{bibli.bib}

\begin{thebibliography}{10}
\providecommand{\url}[1]{{#1}}
\providecommand{\urlprefix}{URL }
\expandafter\ifx\csname urlstyle\endcsname\relax
  \providecommand{\doi}[1]{DOI~\discretionary{}{}{}#1}\else
  \providecommand{\doi}{DOI~\discretionary{}{}{}\begingroup
  \urlstyle{rm}\Url}\fi

\bibitem{allassonniere2013}
Allassonniere, S., Bigot, J., Glaunes, J., Maire, F., F., R.: Statistical
  models for deformable templates in image and shape analysis.
\newblock Annales Mathematiques Blaise Pascal \textbf{20(1)}, 1--35 (2013)

\bibitem{barron2008}
Barron, A.R., Cohen, A., Dahmen, W., DeVore, R.A.: Approximation and learning
  by greedy algorithms.
\newblock Ann. Statist. \textbf{36}(1), 64--94 (2008).
\newblock \doi{10.1214/009053607000000631}.
\newblock \urlprefix\url{https://doi.org/10.1214/009053607000000631}

\bibitem{bigot2012}
Bigot, J., Christophe, C., Gadat, S.: Random action of compact lie groups and
  minimax estimation of a mean pattern.
\newblock IEEE Transactions on Information Theory \textbf{58}(6), 3509--3520
  (2012)

\bibitem{bigot2010}
Bigot, J., Gadat, S.: A deconvolution approach to estimation of a common shape
  in a shifted curves model.
\newblock Ann. Statist. \textbf{38}(4), 2422--2464 (2010).
\newblock \doi{10.1214/10-AOS800}.
\newblock \urlprefix\url{https://doi.org/10.1214/10-AOS800}

\bibitem{bigot09}
Bigot, J., Gamboa, F., Vimond, M.: Estimation of translation, rotation, and
  scaling between noisy images using the fourier–mellin transform.
\newblock SIAM Journal on Imaging Sciences \textbf{2}, 614--645.
\newblock ISSN 2936-4954

\bibitem{fournier}
Fournier, E., Grihon, S., Klein, T.: {A case study : Influence of Dimension
  Reduction on regression trees-based Algorithms -Predicting Aeronautics Loads
  of a Derivative Aircraft} (2018).
\newblock \urlprefix\url{https://hal.archives-ouvertes.fr/hal-01700314}.
\newblock Working paper or preprint

\bibitem{gamboa2007}
Gamboa, F., Loubes, J.M., Maza, E.: Semi-parametric estimation of shifts.
\newblock Electron. J. Statist. \textbf{1}, 616--640 (2007).
\newblock \doi{10.1214/07-EJS026}.
\newblock \urlprefix\url{https://doi.org/10.1214/07-EJS026}

\bibitem{gasser1995}
Gasser, T., Kneip, A.: Searching for structure in curve samples.
\newblock Journal of the american statistical association \textbf{90}(432),
  1179--1188 (1995)

\bibitem{gassiat2006}
Gassiat, E., L{\'e}vy-Leduc, C.: Efficient semiparametric estimation of the
  periods in a superposition of periodic functions with unknown shape.
\newblock Journal of Time Series Analysis \textbf{27}(6), 877--910 (2006)

\bibitem{golubev1988}
Golubev, G.: Estimation of the period of a signal with an unknown form against
  a white noise background.
\newblock Problems Inform. Transmission \textbf{24}(4), 288--299 (1988)

\bibitem{grenander93}
Grenander, U.: General pattern theory.
\newblock Oxford Science Publications  (1993)

\bibitem{hardle1990}
Hardle, W., Marron, J.S.: Semiparametric comparison of regression curves.
\newblock The Annals of Statistics pp. 63--89 (1990)

\bibitem{ke2001}
Ke, C., Wang, Y.: Semiparametric nonlinear mixed-effects models and their
  applications.
\newblock Journal of the American Statistical Association \textbf{96}(456),
  1272--1298 (2001)

\bibitem{kneip1995}
Kneip, A., Engel, J.: Model estimation in nonlinear regression under shape
  invariance.
\newblock The Annals of Statistics pp. 551--570 (1995)

\bibitem{kneip2000}
Kneip, A., Li, X., MacGibbon, K., Ramsay, J.: Curve registration by local
  regression.
\newblock Canadian Journal of Statistics \textbf{28}(1), 19--29 (2000)

\bibitem{lawton1972}
Lawton, W., Sylvestre, E., Maggio, M.: Self modeling nonlinear regression.
\newblock Technometrics \textbf{14}(3), 513--532 (1972)

\bibitem{lindstrom1995}
Lindstrom, M.J.: Self-modelling with random shift and scale parameters and a
  free-knot spline shape function.
\newblock Statistics in medicine \textbf{14}(18), 2009--2021 (1995)

\bibitem{mallat93}
Mallat, S.G., Zhang, Z.: Matching pursuits with time-frequency dictionaries.
\newblock IEEE Transactions on Signal Processing \textbf{41}(12), 3397--3415
  (1993).
\newblock \doi{10.1109/78.258082}

\bibitem{McGuire1998ParameterRecovery}
McGuire, M.: An image registration technique for recovering rotation, scale and
  translation parameters.
\newblock Tech. rep., {NEC} Tech Report (1998).
\newblock
  \urlprefix\url{http://casual-effects.com/research/McGuire1998ParameterRecovery/index.html}

\bibitem{NoceWrig06}
Nocedal, J., Wright, S.J.: Numerical Optimization, second edn.
\newblock Springer, New York, NY, USA (2006)

\bibitem{ramsay}
Ramsay, J.O., Li, X.: Curve registration.
\newblock Journal of the Royal Statistical Society: Series B (Statistical
  Methodology) \textbf{60}(2), 351--363.
\newblock \doi{10.1111/1467-9868.00129}

\bibitem{ritter01}
Ritter, M., Dillinger, J.: Nonlinear numerical flight dynamics for the
  prediction of maneuver loads.
\newblock IFASD 2011 pp. 1--10 (2011)

\bibitem{sancetta2016}
Sancetta, A.: Greedy algorithms for prediction.
\newblock Bernoulli \textbf{22}(2), 1227--1277 (2016).
\newblock \doi{10.3150/14-BEJ691}.
\newblock \urlprefix\url{https://doi.org/10.3150/14-BEJ691}

\bibitem{vdV00}
Van~der Vaart, A.W.: Asymptotic statistics.
\newblock Cambridge Series in Statistical and Probabilistic Mathematics.
  Cambridge University Press (1998)

\bibitem{vimond2006}
Vimond, M.: Efficient estimation for homothetic shifted regression models.
\newblock Universit{\'e} Paul Sabatier, Laboratoire de statistique et
  probabilit{\'e}s (2006)

\end{thebibliography}

\end{document}